\newtheorem{remark}{Remark}[section]
\newtheorem{theorem}{Theorem}
\begin{document}

\title{A stochastic time-delayed model for the effectiveness\\ of Moroccan COVID-19 deconfinement strategy}

\thanks{This research is part of first  author's Ph.D. project,
which is carried out at University of Aveiro.}

\thanks{Work partially supported through FCT, project UIDB/04106/2020 (CIDMA).}

\runningtitle{A stochastic time-delayed model for the effectiveness of Moroccan COVID-19}

% -------------------------------------------

\author{Houssine Zine}
\address{Center for Research and Development in Mathematics and Applications (CIDMA),
Department of Mathematics, University of Aveiro, 3810-193 Aveiro, Portugal;
\email{zinehoussine@ua.pt\ \&\ delfim@ua.pt}}

\author{Adnane Boukhouima}
\address{Laboratory of Analysis, Modeling and Simulation (LAMS),
Faculty of Sciences Ben M'sik, Hassan II University of Casablanca,
P.B 7955 Sidi Othman, Casablanca, Morocco;
\email{adnaneboukhouima@gmail.com\ \&\
lotfiimehdi@gmail.com\ \&\ \\
marouane.mahrouf@gmail.com\ \&\
nourayousfi.fsb@gmail.com}}

\author{El Mehdi Lotfi}
\sameaddress{2}

\author{Marouane Mahrouf}
\sameaddress{2}

\author{Delfim F. M. Torres}
\sameaddress{1}

\author{Noura Yousfi}
\sameaddress{2}

\runningauthors{H. Zine, A. Boukhouima, E. M. Lotfi, M. Mahrouf, D. F. M. Torres, N. Yousfi}

\date{Submitted: May 16, 2020; Revised: August 20, 2020; Accepted: October 28, 2020}

% -------------------------------------------

\begin{abstract}
Coronavirus disease 2019 (COVID-19) poses a great threat
to public health and the economy worldwide. Currently,
COVID-19 evolves in many countries to a second stage,
characterized by the need for the liberation of the economy
and relaxation of the human psychological effects. To this end,
numerous countries decided to implement adequate deconfinement strategies.
After the first prolongation of the established confinement,
Morocco moves to the deconfinement stage on May 20, 2020.
The relevant question concerns the impact on the COVID-19
propagation by considering an additional degree of realism related
to stochastic noises due to the effectiveness level of the adapted measures.
In this paper, we propose a delayed stochastic mathematical model
to predict the epidemiological trend of COVID-19 in Morocco after the deconfinement.
To ensure the well-posedness of the model, we prove the existence and uniqueness
of a positive solution. Based on the large number theorem for martingales,
we discuss the extinction of the disease under an appropriate threshold parameter.
Moreover, numerical simulations are performed in order to test the efficiency
of the deconfinement strategies chosen by the Moroccan authorities to help
the policy makers and public health administration to make
suitable decisions in the near future.	
\end{abstract}

\subjclass{60H10, 92D30}

\keywords{Coronavirus disease 2019 (COVID-19),
deconfinement strategy, mathematical modeling,
delayed stochastic differential equations (DSDEs), extinction.}

\maketitle

% ---------------------------------------------

\section{Introduction}

Coronavirus disease 2019 (COVID-19), reclassified as a pandemic
by the World Health Organization (WHO) on March 11, 2020 \cite{MHM},
is an infectious disease caused by a new type of virus belonging
to the coronaviruses family and recently named severe acute
respiratory syndrome coronavirus 2 (SARS-CoV-2) \cite{ICTW}.
All the countries affected by this disease have taken many preventive measures,
including containment. The containment established by the Moroccan government
and the public authorities at the right time made it possible to avoid the worst:
according to the minister of Health, at least 6000 lives were saved thanks
to the measures adopted to face the spread of this pandemic \cite{url:HIVdata:morocco}.
The resistance measures, regarded as necessary and urgent, cannot be sustainable.

Actually, the deconfinement is a new stage entered by the COVID-19 pandemic.
Therefore, several countries strategically planned their deconfinement strategies.
The extension of the state of emergency in Morocco until May 20, 2020
will no doubt have economic repercussions. If Morocco won the first round,
or at least limited the consequences, especially in terms of limiting
the pandemic and health management of the situation, the second seems
difficult and complex. Indeed, it must not only be well thought out but
also its axes of resistance have to be well-identified. In this context,
all efforts should be focused on stabilizing the economy by intelligently
relying on resources. Economic deconfinement is part of the solution
and should be gradual and concerted. Indeed, it is absurd to think that
the return to the normality is in the near months, because the unavailability
of an effective vaccine implies that the virus will always be with us
in the near future, which poses a risk for the population. This economic
deconfinement should be prepared and accompanied by other related measures, 
in particular under health, security, education and social assistance.
In this period of general crisis, the response must try to mitigate the
impacts on priority sectors, such as agriculture, agrifood, transport
and foreign trade, in relation to imports that are vital to the Moroccan economy.
The challenge is to ensure resistance and a continuity of value creation
while preventing a sector from being detached from the economic body.
So to speak, priority must be given to vital sectors whose health
directly affects all Moroccan activity, while protecting those bordering
on chaos. According to the deconfinement strategy,
which is applied by the Moroccan authorities, it is mandatory to
study the occurrence of an eventual second wave and it's magnitude.

Mathematical modeling through dynamical systems plays an important
role to predict the evolution of COVID-19 transmission \cite{MR4095770,MR4093642}.
However, while taking into account the deconfinement policies, the environmental
effects and the social fluctuations should not be neglected
in such a mathematical study in order to describe well the dynamics
and consider an additional degree of realism \cite{Tang,Wu,Kuniya,Fanelli,Boudrioua}.
For these reasons, we describe here the dynamics of the deconfinement strategy
by a new D-COVID-19 model, governed by delayed stochastic
differential equations (DSDE), as follows:
\begin{equation}
\label{Sys1}
\begin{cases}
dS(t)=\left(\rho C(t)-\delta S(t)-\beta(1-u)\dfrac{S(t)I_{s}(t)}{N}\right)dt
-\sigma_{1}(1-u)\dfrac{S(t)I_{s}(t)}{N}dB_{1}(t)+ \sigma_{2}(C(t)-S(t))dB_{2}(t),\\
dC(t)=\left(\delta S(t)- \rho C(t)\right)dt + \sigma_{2}(S(t)-C(t))dB_{2}(t), \\
dI_{s}(t)=\left(\beta\epsilon (1-u)\dfrac{S(t-\tau_1)I_{s}(t-\tau_1)}{N}-\alpha I_{s}(t)
-(1-\alpha)(\mu_s+\eta_{s})I_s(t)\right)dt \\
\qquad \qquad +\sigma_{1}\left(\epsilon (1-u)\dfrac{S(t-\tau_1)I_{s}(t-\tau_1)}{N}\right)dB_{1}(t)
+\sigma_{3}(\mu_s+\eta_{s}-1)I_s(t)dB_{3}(t),\\
dI_{a}(t)=\left(\beta(1-\epsilon)(1-u)\dfrac{S(t-\tau_1)I_{s}(t-\tau_1)}{N}
-\eta_{a}I_{a}(t)\right)dt
+\sigma_{1}(1-\epsilon)(1-u)\dfrac{S(t-\tau_1)I_{s}(t-\tau_1)}{N}dB_{1}(t),\\
dF_{b}(t)=\bigg(\alpha\gamma_{b}I_{s}(t-\tau_2)-\big(\mu_b+r_b\big)F_{b}(t)\bigg)dt
+\sigma_{3}\gamma_{b}I_{s}(t-\tau_2)dB_{3}(t),\\
dF_{g}(t)=\bigg(\alpha\gamma_{g}I_{s}(t-\tau_2)-\big(\mu_g+r_g\big)F_{g}(t)\bigg)dt
+\sigma_{3}\gamma_{g}I_{s}(t-\tau_2)dB_{3}(t),\\
dF_{c}(t)=\bigg(\alpha\gamma_{c}I_{s}(t-\tau_2)-\big(\mu_c+r_c\big)F_{c}(t)\bigg)dt
+\sigma_{3}\gamma_{c}I_{s}(t-\tau_2)dB_{3}(t),\\
dR(t)=\bigg(\eta_{s}(1-\alpha)I_s(t-\tau_3)+\eta_{a}I_{a}(t-\tau_3)
+r_{b}F_{b}(t-\tau_4)+r_g F_{g}(t-\tau_4)
+r_c F_{c}(t-\tau_4)\bigg)dt \\
\qquad \qquad -\sigma_3\eta_{s}I_{s}(t-\tau_3)dB_3(t),\\
dM(t)=\bigg(\mu_s(1-\alpha)I_s(t-\tau_3)+\mu_bF_{b}(t-\tau_4)
+\mu_gF_{g}(t-\tau_4)+\mu_cF_{c}(t-\tau_4)\bigg)dt
-\sigma_{3}\mu_{s}I_{s}(t-\tau_3)dB_{3}(t),
\end{cases}
\end{equation}
where $S$ represents the susceptible sub-population, which is not infected
and has not been infected before but is susceptible to develop the disease
if exposed to the virus; $C$ is the confined sub-population;
$I_s$ is the symptomatic infected sub-population, which has not yet been treated,
it transmits the disease, and outside of proper support it can progress
to spontaneous recovery or death; $I_a$ is the asymptomatic infected sub-population
who is infected but does not transmit the disease, is not known by the health
system and progresses spontaneously to recovery; $F_b$, $F_g$ and $F_c$
are the patients diagnosed, supported by the Moroccan health system
and under quarantine, and subdivided into three categories: benign, severe,
critical forms, respectively. Finally, $R$ and $M$ are the recovered
and died classes, respectively. At each instant of time, the equation
\begin{equation*}
D(t):=\mu_s(1-\alpha)I_s(t-\tau_3)+\mu_bF_{b}(t-\tau_4)
+\mu_gF_{g}(t-\tau_4)+\mu_cF_{c}(t-\tau_4)
-\sigma_3\mu_{s}I_{s}(t-\tau_3)\dfrac{\Delta B_3(t)}{\Delta t}
=\dfrac{\Delta M(t)}{\Delta t}
\end{equation*}
gives the number of the new dead due to disease. The parameter $1-u$
represents the level of measures undertaken on the susceptible population
while $\delta$ is the confinement rate and $\rho$ represents the deconfinement rate.
We adopt the bilinear incidence rate to describe the infection of the disease
and use the parameter $\beta$ to denote the transmission rate. It is reasonable
to assume that the infected individuals are subdivided into individuals
with symptoms and others without symptoms, for which we employ the parameter
$\epsilon$ to denote the proportion for the symptomatic individuals
and $1-\epsilon$ for the asymptomatic ones. The parameter $\alpha$ measures
the efficiency of public health administration for hospitalization. Diagnosed
symptomatic infected population is completely distributed into one of the 
three forms $F_b$, $F_g$ and $F_c$, by the rates $\gamma_b$, $\gamma_g$ and $\gamma_c$, 
respectively. Then, $\gamma_b+\gamma_g+\gamma_c= 1$. The mean recovery period 
of these forms are denoted by  $1/r_b$, $1/r_g$ and $1/r_c$, respectively. 
The latter forms die also with the rates $\mu_b$, $\mu_g$ and $\mu_c$,
respectively. Symptomatic infected population, which is not diagnosed, moves
to the recovery compartment with a rate $\eta_s$ or dies with a rate $\mu_s$.
On the other hand, asymptomatic infected population moves to the recovery 
compartment with a rate $\eta_a$. The time delays $\tau_1$ and $\tau_2$ denote
the incubation period and the period of time needed before the charge by the health 
system, respectively. The time delays $\tau_3$ and $\tau_4$ denote the time 
required before the death of individuals coming from the compartments $I_s$ 
and the three forms $F_b$, $F_g$ and $F_c$, respectively. Here, $B_{1}(t)$, 
$B_{2}(t)$ and $B_{3}(t)$ are independent standard Brownian motions
defined on a complete probability space $(\Omega,\mathcal{F},\mathbb{P})$
with a filtration $\{\mathcal{F}_{t}\}_{t\geq0}$ and satisfying the usual conditions,
that is, they are increasing and right continuous while $\mathcal{F}_{0}$ 
contains all P-null sets and $\sigma_{i}$ represents the intensity 
of $B_{i}$, $i=1,2,3$. The schematic diagram of our extended model 
is illustrated in Figure~\ref{fig:sd}.
% --------------------------------------------------
\begin{center}
\includegraphics[scale=1.4]{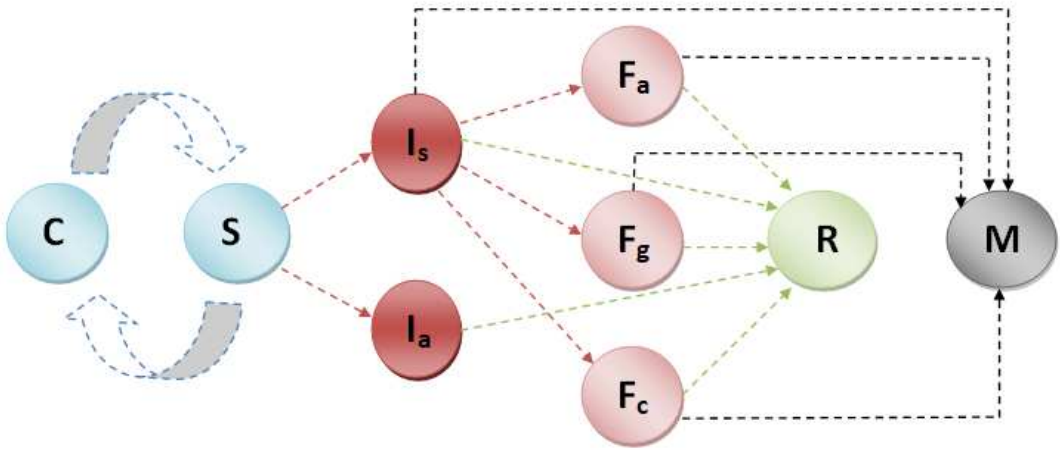}
\captionof{figure}{Schematic diagram of model \eqref{Sys1}.}
\label{fig:sd}
\end{center}
% --------------------------------------------------

\begin{remark}
The stochasticity is introduced in model \eqref{Sys1} by perturbing 
the most sensitive parameters: $\beta$, $\alpha$, $\delta$, and $\rho$.
\end{remark}

\begin{remark}
For the sake of simplicity, we have assumed that the parameters 
$\delta$ and $\rho$ are perturbed with the same intensities, that is,
we assume that Moroccan individuals possess the same behaviors and reactions 
towards the authorities instructions.
\end{remark}

\begin{remark}
Note that the multipliers of $I_s(t-\tau_2)$ terms are the same as $\sigma_3dB_3(t)$ 
although they are premultiplied by different constants.
Indeed, the portion of diagnosed symptomatic infected population 
is completely distributed into the three forms $F_b$, $F_g$, and $F_c$, 
by the rates $\gamma_b$, $\gamma_g$ and $\gamma_c$, respectively. 
Then, $\gamma_b+\gamma_g+\gamma_c= 1$. In addition, we assume 
that the parameter $\alpha$, which measures the efficiency of public 
health administration for hospitalization, undergoes random fluctuations. 
\end{remark}

\begin{remark}
For illustration and clarification purposes, let us suppose, as an example, 
that the disease progression is started from 15th of March  
and value of $\tau_1$ is 5. Then a susceptible individual, 
after contact with an infected one at instant $t$, becomes himself 
infected at instant $t+\tau_1$. Suddenly, the compartment 
of the infected is fed at the instant $t$ by the susceptible 
infected at the instant $t-\tau_1$. Therefore, 
in the considered situation,
when the infection starts at March 15, the term 
$\beta\epsilon(1-u)S(T)I_s(T)/N$ of new infected 
is equal to zero on 12th, 13th and 14th March,
due to the absence of the infection.
\end{remark}

\begin{remark}
Temporarily asymptomatic individuals are included in the class $I_s$ of symptomatic, 
while individuals in $I_a$, who are permanently asymptomatic, will remain asymptomatic 
until recovery and will not spread the virus, a fact which has been recently 
confirmed by the World Health Organization.
\end{remark}

For biological reasons, we assume that the initial conditions
of system (\ref{Sys1}) satisfy:
\begin{equation}
\label{ic}
\begin{array}{ll}
S(\theta) =\phi_{1}(\theta)\geq 0,
\quad C(\theta) =\phi_{2}(\theta)\geq 0,
\quad I_a(\theta)=\phi_{3}(\theta)\geq 0,\\
I_s(\theta)=\phi_{4}(\theta)\geq 0,
\quad F_b(\theta) =\phi_{5}(\theta)\geq 0,
\quad F_g(\theta)=\phi_{6}(\theta)\geq 0,\\
F_c(\theta)=\phi_{7}(\theta) \geq 0,
\quad R(\theta)=\phi_{8}(\theta)\geq 0,
\quad M(\theta)=\phi_{9}(\theta)\geq 0,
\end{array}
\end{equation}
where $\theta \in [-\tau,0]$ and
$\tau=\max\{\tau_1,\tau_2\,\tau_3,\tau_4\}$.

The rest of the paper is organized as follows.
Section~\ref{sec:2} deals with the existence and uniqueness
of a positive global solution that ensures the well-posedness
of the D-COVID-19 model \eqref{Sys1}. A sufficient condition
for the extinction is established in Section~\ref{sec:3}.
Then, some numerical scenarios, to assess the effectiveness of the
adopted deconfinement strategy, are presented in Section~\ref{sec:4}.
The paper ends up with Section~\ref{sec:5} of conclusion.

% ---------------------------------------------

\section{Existence and uniqueness of a positive global solution}
\label{sec:2}

Let us denote $\mathbb{R}^9_+
:=\{(x_1,x_2,x_3,x_4,x_5,x_6,x_7,x_8,x_9)\mid x_i>0, \, i=1,2,\ldots,9\}$.
We begin by proving the following result.

\begin{theorem}
\label{thm:2.1}
For any initial value satisfying condition \eqref{ic}, there is a unique solution
$$
x(t)=(S(t),C(t),I_s(t),I_a(t),F_b(t),F_g(t),F_c(t),R(t),M(t))
$$
to the D-COVID-19 model \eqref{Sys1} that remains in $\mathbb{R}^9_+$ with probability one.
\end{theorem}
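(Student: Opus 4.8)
Follow the classical two-step route for stochastic functional differential equations: obtain a unique local solution from a local Lipschitz condition, then push it to a global positive solution by a Lyapunov (Khasminskii) argument.

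\textbf{Local solution.} On the open orthant $\mathbb{R}^9_+$ the drift and the three diffusion coefficients of \eqref{Sys1} are polynomials in the state variables (the only division being by the positive constant $N$, which is harmless) and therefore locally Lipschitz; the time-delayed arguments enter only through the prescribed continuous history \eqref{ic}. Hence, by the standard existence-uniqueness theorem for delay stochastic differential equations, there is a unique maximal solution $x(t)=(S,C,I_s,I_a,F_b,F_g,F_c,R,M)(t)$ on $[-\tau,\tau_e)$, with values in $\mathbb{R}^9_+$, where $\tau_e$ denotes the explosion time. Positivity on $[0,\tau_e)$ is thus automatic; the content of the theorem is that $\tau_e=\infty$ almost surely.

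\textbf{Non-explosion.} Pick $k_0\in\mathbb{N}$ so that $\phi$ takes all its values in $[1/k_0,k_0]$, and for $k\ge k_0$ define the stopping time $\tau_k=\inf\{\,t\in[0,\tau_e):x_i(t)\notin(1/k,k)\ \text{for some}\ i\,\}$. The $\tau_k$ increase to some $\tau_\infty\le\tau_e$, and it suffices to show $\tau_\infty=\infty$ a.s.; assume instead that $\mathbb{P}(\tau_\infty\le T)>\varepsilon$ for some $T,\varepsilon>0$. Apply It\^o's formula to the nonnegative $C^2$ function $V(x)=\sum_{i=1}^{9}(x_i-1-\ln x_i)$. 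Exploiting the conservation-type identities of the model ($\gamma_b+\gamma_g+\gamma_c=1$, $\epsilon+(1-\epsilon)=1$, and the matching of the $I_s$- and $F_b,F_g,F_c$-fluxes) one wants to bound, for $t<\tau_k$,
\[
\mathcal{L}V(x(t))\ \le\ K\Big(1+V(x(t))+\sum_{j=1}^{4}V(x(t-\tau_j))\Big),
\]
with $K$ independent of $k$: the self-limiting linear terms $-\alpha I_s$, $-\eta_a I_a$, $-(\mu_b+r_b)F_b$, $-(\mu_g+r_g)F_g$, $-(\mu_c+r_c)F_c$, $-\beta(1-u)SI_s/N$ should dominate the positive current contributions, and the delayed contributions get absorbed into delayed copies of $V$. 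Integrating over $[0,T\wedge\tau_k]$, taking expectations (the It\^o integrals are genuine martingales up to $\tau_k$), rewriting each $\mathbb{E}\int_0^{T\wedge\tau_k}V(x(s-\tau_j))\,ds$ as $\int_{-\tau}^{0}V(\phi(s))\,ds+\mathbb{E}\int_0^{T\wedge\tau_k}V(x(s))\,ds$, and applying Gronwall's inequality gives $\mathbb{E}\,V(x(T\wedge\tau_k))\le C(T)$ with $C(T)$ independent of $k$. On $\{\tau_k\le T\}$ some component of $x(\tau_k)$ equals $1/k$ or $k$, hence $V(x(\tau_k))\ge\gamma(k):=\min\{\,k-1-\ln k,\ 1/k-1+\ln k\,\}\to\infty$, so $C(T)\ge\varepsilon\,\gamma(k)$ for all $k\ge k_0$ — a contradiction. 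Therefore $\tau_e=\tau_\infty=\infty$ a.s., and $x(t)$ stays in $\mathbb{R}^9_+$ for all $t\ge 0$ with probability one.

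\textbf{Main obstacle.} The only non-routine point is the differential inequality for $\mathcal{L}V$. One must expand all nine It\^o differentials, use the parameter identities to cancel the terms that are not sign-definite, and — the genuinely delicate part — handle the four distinct delays $\tau_1,\dots,\tau_4$, which occur not only in the drift (incubation, hospitalization, death transfers) but also inside the diffusion coefficients of the $I_s,I_a,F_b,F_g,F_c$ equations; this forces one either to dominate those delayed terms by delayed copies of $V$ that survive the Gronwall step, or to run the whole argument by the method of steps on successive intervals $[(n-1)\tau,n\tau]$, where the history is already known. The remaining ingredients — the local theory, the martingale property on $[0,\tau_k]$, and the concluding Chebyshev-type estimate — are standard.
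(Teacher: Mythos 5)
Your overall architecture coincides with the paper's: a unique local solution from locally Lipschitz coefficients, then non-explosion via stopping times $\tau_k$, It\^{o}'s formula applied to a Lyapunov function, a Gronwall estimate uniform in $k$, and the contradiction obtained from $V(x(\tau_k))\to\infty$ on $\{\tau_k\le T\}$. The paper differs only in its choice of Lyapunov function, $V=(S+C+I_s)^2+1/S+1/C+1/I_s$ on the $(S,C,I_s)$ block (the remaining six components are dispatched with ``the same technique''), whereas you take $V=\sum_{i=1}^{9}(x_i-1-\ln x_i)$ over all nine components and are, commendably, more explicit than the paper about how the four delays must be absorbed in the Gronwall step.

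However, the step you defer --- the inequality $\mathcal{L}V\le K\bigl(1+V(x(t))+\sum_{j}V(x(t-\tau_j))\bigr)$ with $K$ independent of $k$ --- is not a routine verification: as stated it fails for your $V$. Since your $V$ is a sum of functions of single components, $\mathcal{L}V$ contains the terms $\tfrac{1}{2x_i^{2}}a_{ii}$, where $a_{ii}$ is the squared diffusion coefficient of the $i$-th equation. For $i=I_s$ this produces
\[
\frac{\sigma_1^{2}\epsilon^{2}(1-u)^{2}}{2N^{2}}\,
\frac{S(t-\tau_1)^{2}\,I_s(t-\tau_1)^{2}}{I_s(t)^{2}},
\]
which, with the delayed arguments held fixed, blows up like $I_s(t)^{-2}$ as $I_s(t)\to 0^{+}$, while your right-hand side grows only like $-\ln I_s(t)$; the same obstruction occurs for $F_b,F_g,F_c$ (diffusion coefficients $\sigma_3\gamma_{b}I_s(t-\tau_2)$, etc., which do not vanish as the corresponding component tends to $0$) and for $R$ and $M$. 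In addition, the delayed product $S(t-\tau_1)I_s(t-\tau_1)$ is quadratic in the delayed state and therefore is not dominated by a delayed copy of your essentially linearly growing $V$, so it cannot simply be ``absorbed'' in the Gronwall step. Thus the crux of the argument is exactly the part you have not supplied, and it cannot be supplied in the form you propose without changing $V$ (e.g.\ adding quadratic and reciprocal terms, as the paper does) or otherwise restructuring the estimate. (For completeness: the paper's own bound $LV\le D(1+V)$ faces an analogous difficulty with its term $\tfrac{1}{I_s^{3}}\bigl(\sigma_1\epsilon(1-u)S(t-\tau_1)I_s(t-\tau_1)/N\bigr)^{2}$, which it passes over; but that does not repair your version.)
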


\begin{proof}
Since the coefficients of the Stochastic Differential Equations with several delays
\eqref{Sys1} are locally Lipschitz continuous, it follows from \cite{Mao} that for
any square integrable initial value $x(0)\in\mathbb{R}^9_+$, which is independent
of the considered standard Brownian motion $B$, there exists a unique local
solution $x(t)$ on $t\in[0,\tau_e)$, where  $\tau_e$ is the explosion time.
For showing that this solution is global, knowing that the linear growth condition
is not verified, we need to prove that $\tau_e=\infty$. Let $k_0>0$ be sufficiently
large for $\dfrac{1}{k_0}<x(0)<k_0$. For each integer $k\geq k_0$, we define
the stopping time
$\tau_k := \inf\left\lbrace t\in[0,\tau_e)/x_i(t)\notin
\left( \dfrac{1}{k},k\right)\; \text{for some}\; i=1,2,3\right\rbrace$,
where $\inf \emptyset=\infty$. It is evident that $\tau_k\leq \tau_e$.
Let $T>0$, and define the twice differentiable function
$V$ on $\mathbb{R}^3_+\rightarrow \mathbb{R}^+$ as follows:
$$
V(x):=(x_1+x_2+x_3)^2+\frac{1}{x_1}+\frac{1}{x_2}+\frac{1}{x_3}.
$$
By It\^{o}'s formula, for any 
$0\leq t\leq \tau_k \wedge T$ and $k\geq 1 $ we have
$$
dV(x(t))=LV(x(t))dt+\sigma(x(t))dB_t,
$$
where $L$ is the differential operator of function $V$:
\begin{equation*}
\begin{split}
L&V(x(t)) = \left( 2(S(t)+C(t)+I_s(t))-\dfrac{1}{S^2(t)}\right)
\left( \rho C(t)-\delta S(t)-\beta (1-u)\dfrac{S(t) I_s(t )}{N}\right)\\
&+ \dfrac{1}{2}\left( 2+\frac{2}{S^3(t)}\right) \left( \left(
-\sigma_1(1-u)\dfrac{S(t) I_s(t )}{N}\right)^2
+(\sigma_2(C(t)-S(t)))^2\right) \\
&+ \left( 2(S(t)+C(t)+I_s(t))-\dfrac{1}{C^2(t)}\right) (\delta S(t)
-\rho C(t))+\dfrac{1}{2}\left( 2+\frac{2}{C^3(t)}\right) (-\sigma_2 (C(t)-S(t)))^2\\
&+ \dfrac{1}{2}\left( 2+\frac{2}{C^3(t)})(-\sigma_2(C(t)-S(t))\right)^2\\
&+ \left( 2(S(t)+C(t)+I_s(t))-\frac{1}{I^2_s(t)}\right) \left(
\beta \epsilon (1-u)\frac{S(t-\tau_1) I_s(t-\tau_1 )}{N}-\alpha I_s(t)
-(1-\alpha)(\mu_s+\eta_s)I_s(t)\right) \\
&+ \dfrac{1}{2}\left( 2+\dfrac{2}{I^3_s(t)}\right) \left( \left(
-\sigma_3 I_s(t)(1-\mu_s-\eta_s)\right) ^2+\left( \sigma_1
\epsilon (1-u)\frac{S(t-\tau_1) I_s(t-\tau_1 )}{N}\right) ^2\right).
\end{split}
\end{equation*}
Thus,
\begin{eqnarray*}
LV(x(t))
&\leq & 2(S(t)+C(t)+I_s(t))\rho C(t)+\frac{\delta}{S(t)}
+\frac{\beta (1-u)S(t) I_s(t)}{NS^{2}(t)}\\
&+&  \left( 1+\dfrac{1}{S^3(t)}\right) \left( \left(
\sigma_1(1-u)\dfrac{S(t) I_s(t)}{N}\right)^2
+ (\sigma_2(C(t)-S(t)))^2\right) \\
&+&  2(S(t)+C(t)+I_s(t))\delta S(t)+\dfrac{\rho}{S(t)}
+\left( 1+\dfrac{1}{C^3(t)}\right) (\sigma_2(C(t)-S(t)))^2\\
&+& 2(S(t)+C(t)+I_s(t))\beta \epsilon (1-u)\frac{S(t-\tau_1) I_s(t-\tau_1 )}{N}\\
&+& \frac{\alpha}{I_s(t)}+(1-\alpha)(\mu_s+\eta_s)\frac{1}{I_s(t)}\\
&+& \left( 1+\dfrac{1}{I^3_s(t)}\right) \left( \left(
-\sigma_3 I_s(t)(1-\mu_s-\eta_s\right) ^2+\left( \sigma_1
\epsilon (1-u)\dfrac{S(t-\tau_1) I_s(t-\tau_1 )}{N}\right) ^2\right).
\end{eqnarray*}
By applying the elementary inequality $2ab\leq a^2+b^2$, we can 
easily increase the right-hand side of the previous inequality
to obtain that
$$
LV(x)\leq D(1+V(x)),
$$
where $D$ is an adequate selected positive constant. 
By integrating both sides of the equality
$$
dV(x(t))=LV(x(t))dt+\sigma(x(t))dB_t
$$
between $t_0$ and $t\wedge \tau_k$ and acting the expectation,
which eliminates the martingale part, we get that
\begin{eqnarray*}
E(V(x(t\wedge \tau_k))
&=& E(V(x_0))+E\int^{t\wedge \tau_k}_{t_0}LV(x_s))ds\\
&\leq & E(V(x_0))+E\int^{t\wedge \tau_k}_{t_0}D(1+V(x_s))ds\\
&\leq &  E(V(x_0))+DT+\int^{t \wedge \tau_k}_{t_0}EV(x_s))ds.
\end{eqnarray*}
Gronwall's inequality implies that
$$
E(V(x(t\wedge \tau_k))\leq (EV(x_0)+DT)\exp(CT).
$$
For $\omega \in \{\tau_k\leq T\}$, $x_i(\tau_k)$
equals $k$ or $\dfrac{1}{k}$ for some $ i=1,2,3$.
Hence,
$$
V(x_i(\tau_k))\geq \left(k^2+\frac{1}{k}\right)
\wedge \left(\frac{1}{k^2}+k\right).
$$
It follows that
\begin{eqnarray*}
(EV(x_0)+DT)\exp(CT)
&\geq & E\left( \chi_{\{\tau_k \leq T\}}(\omega) V(x_{\tau_k})\right)\\
&\geq & \left(k^2+\frac{1}{k}\right)
\wedge \left(\frac{1}{k^2}+k\right)P\left(\tau_k \leq T\right).
\end{eqnarray*}
Letting $k\rightarrow \infty $, we get $P(\tau_e \leq T)=0$.
Since $T$ is arbitrary, we obtain $P(\tau_e =\infty)=1$.
With the same technique, we also deduce that the rest 
of the variables of the system are positive on $[0,\infty)$. 
This concludes the proof.
\end{proof}

% ---------------------------------------------

\section{Extinction of the disease}
\label{sec:3}

In this section, we obtain a sufficient condition
for the extinction of the disease.

\begin{theorem}
\label{thm:3.1}
Let $(S(t),C(t),I_s(t),I_a(t),F_b(t),F_g(t),F_c(t),R(t),M(t))$
be a solution of the D-COVID-19 model \eqref{Sys1} with positive
initial value defined in \eqref{ic}. Assume that
$$
\sigma_1^2> \dfrac{\beta^2}{2(\alpha+(1-\alpha)(\mu_s+\eta_s))}.
$$
Then,
\begin{equation*}
\limsup_{t\to \infty} \ln\dfrac{I_s(t)}{t}<0.
\end{equation*}
Namely, $I_s(t)$ tends to zero exponentially a.s.,
that is, the disease dies out with probability $1$.
\end{theorem}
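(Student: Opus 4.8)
The plan is to run the classical ``It\^o-formula-for-$\ln I_s$'' argument for extinction. First I would apply It\^o's formula to $\ln I_s(t)$, using only the third equation of \eqref{Sys1} and the independence of $B_1$ and $B_3$. Writing
\[
y(t):=\epsilon(1-u)\,\frac{S(t-\tau_1)\,I_s(t-\tau_1)}{N\,I_s(t)},
\]
which is well defined and strictly positive by Theorem~\ref{thm:2.1}, one obtains
\[
d\ln I_s(t)=\Bigl[\beta\,y(t)-\tfrac{\sigma_1^{2}}{2}\,y(t)^{2}-\bigl(\alpha+(1-\alpha)(\mu_s+\eta_s)\bigr)-\tfrac{\sigma_3^{2}}{2}(\mu_s+\eta_s-1)^{2}\Bigr]dt+\sigma_1 y(t)\,dB_1(t)+\sigma_3(\mu_s+\eta_s-1)\,dB_3(t).
\]
The crucial elementary step is to complete the square, $\beta y-\tfrac{\sigma_1^{2}}{2}y^{2}\le \beta^{2}/(2\sigma_1^{2})$ for every real $y$, and to drop the nonpositive term $-\tfrac{\sigma_3^{2}}{2}(\mu_s+\eta_s-1)^{2}$. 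Integrating over $[0,t]$ and dividing by $t$ then gives
\[
\frac{\ln I_s(t)}{t}\le\frac{\ln I_s(0)}{t}+\lambda+\frac{M_1(t)}{t}+\frac{\sigma_3(\mu_s+\eta_s-1)}{t}\,B_3(t),\qquad \lambda:=\frac{\beta^{2}}{2\sigma_1^{2}}-\bigl(\alpha+(1-\alpha)(\mu_s+\eta_s)\bigr),
\]
where $M_1(t):=\int_0^{t}\sigma_1 y(s)\,dB_1(s)$ is a continuous local martingale; the hypothesis $\sigma_1^{2}>\dfrac{\beta^{2}}{2(\alpha+(1-\alpha)(\mu_s+\eta_s))}$ is precisely the statement $\lambda<0$.

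It then remains to show that the two stochastic terms are $o(t)$ almost surely. The $B_3$-term is harmless, since $B_3(t)/t\to0$ a.s. For $M_1$ I would invoke the strong law of large numbers for local martingales (the ``large number theorem for martingales'' referred to in the abstract): it is enough to verify $\limsup_{t\to\infty}t^{-1}\langle M_1\rangle_t<\infty$ a.s., where $\langle M_1\rangle_t=\sigma_1^{2}\int_0^{t}y(s)^{2}\,ds$. Granting this, $M_1(t)/t\to0$ a.s., and letting $t\to\infty$ in the displayed inequality gives $\limsup_{t\to\infty}t^{-1}\ln I_s(t)\le\lambda<0$; hence $I_s(t)\to0$ exponentially with probability one, which is the assertion.

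The step I expect to be the main difficulty is exactly this control of $M_1$: its integrand $y(s)$ involves the ratio $I_s(s-\tau_1)/I_s(s)$, which has no obvious a priori bound --- and a crude bound would require a lower bound on $I_s(s)$, contrary to the very conclusion being sought. I would therefore first extract a priori estimates on the solution from the structure of \eqref{Sys1} --- for instance $S+C$ is, modulo a local martingale, nonincreasing, which (being positive) forces $S$, and hence $S(\cdot-\tau_1)$, to stay bounded --- and then control the time average $t^{-1}\int_0^{t}y(s)^{2}\,ds$ by a Lyapunov/Gronwall estimate for the reciprocal moments of $I_s$ in the spirit of the argument used in the proof of Theorem~\ref{thm:2.1}, after which the martingale strong law would complete the argument. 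The remaining ingredients --- the It\^o computation, the quadratic estimate $\beta y-\tfrac{\sigma_1^{2}}{2}y^{2}\le\beta^{2}/(2\sigma_1^{2})$, and the final passage to the limit --- are entirely routine.
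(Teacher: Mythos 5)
Your proposal follows exactly the route taken in the paper: It\^o's formula applied to $\ln I_s$, completion of the square to obtain the drift bound $\beta^{2}/(2\sigma_1^{2})-\bigl(\alpha+(1-\alpha)(\mu_s+\eta_s)\bigr)$, and the strong law of large numbers for local martingales to show that the stochastic integrals are $o(t)$. Your handling of the $B_3$-term is in fact cleaner than the paper's (you note its integrand is the constant $\sigma_3(\mu_s+\eta_s-1)$, whereas the paper computes its quadratic variation and invokes the martingale law again), and your It\^o computation is free of the small typographical slips present in the paper's version (a spurious factor $\beta$ in the diffusion coefficient, a missing $1/I_s^2$ in one quadratic-variation term).

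The one step you defer --- verifying $\limsup_{t\to\infty}t^{-1}\langle M_1\rangle_t<\infty$ a.s. --- is precisely the step the paper disposes of with a crude bound: it estimates
\[
\langle M_1\rangle_t=\int_0^{t}\sigma_1^{2}\epsilon^{2}(1-u)^{2}\,
\frac{S(s-\tau_1)^{2}I_s(s-\tau_1)^{2}}{N^{2}I_s(s)^{2}}\,ds
\le\int_0^{t}\sigma_1^{2}\epsilon^{2}(1-u)^{2}\,\frac{N^{4}}{N^{2}}\,\frac{1}{I_s(s)^{2}}\,ds
\le\sigma_1^{2}\epsilon^{2}(1-u)^{2}N^{2}\,t,
\]
that is, it bounds $S(s-\tau_1)I_s(s-\tau_1)$ by $N^{2}$ and then simply discards the factor $1/I_s(s)^{2}$, which amounts to assuming $I_s(s)\ge 1$. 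So the difficulty you single out is real, and the paper does not resolve it any more convincingly than you do. Moreover, your proposed repair --- a Lyapunov/Gronwall estimate on reciprocal moments of $I_s$ in the spirit of Theorem~\ref{thm:2.1} --- would at best control expectations of $1/I_s^{2}$, whereas the martingale strong law requires an almost-sure pathwise bound on $t^{-1}\langle M_1\rangle_t$, so that route would not close the gap as stated. In summary: your argument coincides with the paper's in every step it actually carries out; the missing quadratic-variation estimate is a gap in your write-up, but it is equally a gap (papered over by the inequality $N^{2}/I_s^{2}\le N^{2}$) in the published proof.
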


\begin{proof}
Let
\begin{align*}
d\ln I_s(t)
&=  \left[ \dfrac{1}{I_s(t)}\left( \dfrac{\beta \epsilon (1-u)
S(t-\tau_1)I_s(t-\tau_1)}{N} - \alpha I_s(t) - (1-\alpha)(\mu_s+\eta_s)I_s(t)\right) \right.\\
& \left. -\dfrac{1}{2I_s^{2}(t)}\left( \left(\sigma_1 \dfrac{\beta
\epsilon (1-u)S(t-\tau_1)I_s(t-\tau_1)}{N} \right)^{2}
+ \left( \sigma_3(\mu_s+\eta_s-1)I_s(t)\right)^{2}\right)\right] dt\\
&+ \dfrac{1}{I_s(t)} \sigma_1 \dfrac{\beta \epsilon (1-u)S(t-\tau_1)
I_s(t-\tau_1)}{N} dB_1 + \dfrac{1}{I_s(t)} \sigma_3 (\mu_s+\eta_s-1)I_s(t) dB_3.
\end{align*}
To simplify, we set
\begin{align*}
G(t)&:= \dfrac{ \epsilon (1-u)S(t-\tau_1)I_s(t-\tau_1)}{N},\\
R_{1}(t)&:= \dfrac{\sigma_1}{I_s(t)}  \dfrac{\beta \epsilon (1-u)
S(t-\tau_1)I_s(t-\tau_1)}{N}  =\dfrac{\beta \sigma_1}{I_s(t)} G,\\
R_{3}(t)&:= \dfrac{\sigma_3}{I_s(t)}  (\mu_s+\eta_s-1)I_s(t),\\
H &:=-\alpha -(1-\alpha)(\mu_s+\eta_s).
\end{align*}
We then get
\begin{align*}
d\ln I_s(t) &=
\dfrac{\beta G(t)}{I_s(t)} - H(t) -\dfrac{1}{2} \left( \left(
\dfrac{\sigma_1 G(t)}{I_s(t)} \right)^{2}+ \left( \sigma_3(\mu_s
+\eta_s-1)I_s(t)\right)^{2}\right)+ R_1(t) dB_{1} + R_3(t) dB_{3}\\
&= -\dfrac{\sigma_1^{2}}{2}\left[ \left(
\dfrac{G(t)}{I_s(t)}\right) ^{2}- \dfrac{2\beta}{\sigma_1^{2}}
\dfrac{G(t)}{I_s(t)}\right] + H + R_1(t) dB_{1}+ R_3(t) dB_{3}\\
&= -\dfrac{\sigma_1^{2}}{2} \left[ \left( \dfrac{G(t)}{I_s(t)}
- \dfrac{\beta}{\sigma_1^{2}} \right) ^{2}
- \dfrac{\beta^{2}}{\sigma_1^4} \right] + H + R_1(t) dB_{1} + R_3(t) dB_{3}\\
&\leq \,\dfrac{\beta^{2}}{2\sigma_1^{2}}  + H + R_1(t) dB_{1} + R_3(t) dB_{3}.
\end{align*}
Hence,
\begin{align*}
\dfrac{\ln I_s(t)}{t}
&\leq \dfrac{\ln I_s(0)}{t} + \dfrac{\beta^{2}}{2\sigma_1^{2}}
+ H + \dfrac{M_1(t)}{t} + \dfrac{M_3(t)}{t} ,
\end{align*}
where
\begin{equation*}
M_1(t)= \int_{0}^{t}R_{1}(s)dB_{1},
\quad M_3(t)= \int_{0}^{t}R_{3}(s)dB_{3}.
\end{equation*}
We have
\begin{eqnarray*}
\left<M_1,M_1\right>_t
&=& \int^t_0\left( \frac{1}{I_s(s)}\sigma_1G(s)\right)^2ds\\
&=& \int^t_0{\sigma_1}^2\epsilon^2(1-u)^2\dfrac{S(t-\tau_1)^2I_s(t-\tau_1)^2}{N^2I_s^2}ds\\
&\leq &  \int^t_0{\sigma_1}^2\epsilon^2(1-u)^2\frac{N^4}{N^2}\frac{1}{I^2_s}ds\\
&\leq &  \int^t_0{\sigma_1}^2\epsilon^2(1-u)^2N^2ds.
\end{eqnarray*}
Then,
$$
\underset{t\rightarrow\infty}{\limsup}\dfrac{<M_1,M_1>_t}{t}
\leq {\sigma_1}^2\epsilon^2(1-u)^2N^2<\infty.
$$
From the large number theorem for martingales \cite{Grai}, we deduce that
$$
\underset{t\rightarrow\infty}{\lim}\frac{M_1(t)}{t}=0.
$$
We also have
\begin{eqnarray*}
<M_2,M_2>_t
&=& \int^t_0\left( \frac{1}{I_s(s)}\sigma_3(\mu_s+\eta_s-1)I_s(s)\right)^2ds\\
&=& \int^t_0\left( \sigma^2_3(\mu_s+\eta_s-1)^2\right)ds\\
&=& \left( \sigma^2_3(\mu_s+\eta_s-1)^2\right)t.
\end{eqnarray*}
Thus,
$$
\underset{t\rightarrow\infty}{\limsup}\dfrac{<M_2,M_2>_t}{t}
\leq \sigma^2_3(\mu_s+\eta_s-1)<\infty.
$$
We deduce that
$$
\underset{t\rightarrow\infty}{\lim}\frac{M_2(t)}{t}=0.
$$
Subsequently,
$$
\underset{t\rightarrow\infty}{\limsup}\dfrac{\ln I_s(t)}{t}
\leq \frac{\beta^2}{2\sigma_1^2}-\alpha-(1-\alpha)(\mu_s+\eta_s).
$$
We conclude that if $\dfrac{\beta^2}{2\sigma_1^2}
-\alpha-(1-\alpha)(\mu_s+\eta_s)<0$,
then $\underset{t\rightarrow\infty}{\lim I(t)}=0$.
The proof is complete.
\end{proof}

% ---------------------------------------------

\section{Results and discussion}
\label{sec:4}

In this section, we simulate the forecasts of the D-COVID-19 model
\eqref{Sys1}, relating the deconfinement strategy adopted 
by Moroccan authorities with two scenarios. 
We assume $u$ defined as follows:
$$
u=\left\{
\begin{array}{ll}
u_{0}, & \hbox{$ \text{on}\, [\text{March 2},\text{March 10]}$;} \\
u_{1}, & \hbox{$ \text{on}\, (\text{March 10},\text{March 16]}$;}\\
u_{2}, & \hbox{$ \text{on}\, (\text{March 16},\text{March 20]}$;}\\
u_{3}, & \hbox{$ \text{on}\, (\text{March 20},\text{April 6]}$;}\\
u_{4}, & \hbox{$ \text{on}\, (\text{April 6},\text{April 25]}$;}\\
u_{5}, & \hbox{$ \text{from}\, \text{April 25} \text{ on}$;}\\
\end{array}
\right.
$$
where $u_{i}\in (0,1]$, for $i=0,1,2,3,4,5$, measures
the effectiveness of applying the multiple preventive interventions
imposed by the authorities and presented in Table~\ref{inter}.
% ----------------------------
\begin{center}
\captionof{table}{Summary of considered non-pharmaceutical interventions.}
\label{inter}
\begin{tabular}{ll}
\hline \hline
\textbf{Policies} &  \textbf{Decisions made at the government level} \\ \hline
With minimal social distancing  measure &  $ u = 0.1,$
\hbox{$ \text{on}\, [\text{March 2},\text{March 16]}$}  and\\
 & $ u= 0.2$,  \hbox{$ \text{on}\,(\text{March 16},\text{March 20]}$}  \\ \hline
With middle social distancing measure & $ u= 0.1$,
\hbox{$\text{on}\, [\text{March 2},\text{March 16]}$}, \\
& $ u= 0.2$,  \hbox{$\text{on}\,(\text{March 16},\text{March 20]}$} and \\
& $ u= 0.4$, \hbox{$\text{on}\, (\text{March 20},\text{April 6]}$} \\
\hline With high social distancing measure &    $ u= 0.1,
\hbox{$\text{on}\, [\text{March 2},\text{March 16]}$} $,  \\
 &  $ u= 0.2$,  \hbox{$\text{on}\, (\text{March 16},\text{March 20]}$},  \\
& $ u= 0.4$, \hbox{$\text{on}\, (\text{March 20},\text{April 6]}$} and   \\
& $ u= 0.6$, \hbox{$\text{on}\, (\text{April 6},\text{April 25]}$} \\ \hline
With maximal social distancing measure
&    $ u= 0.1,  \hbox{$\text{on}\, [\text{March 2},\text{March 16]}$} $,  \\
 &  $ u= 0.2$,  \hbox{$\text{on}\, (\text{March 16},\text{March 20]}$},  \\
& $ u= 0.4$, \hbox{$\text{on}\, (\text{March 20},\text{April 6]}$},   \\
& $ u= 0.6$, \hbox{$\text{on}\, (\text{April 6},\text{April 25]}$} and \\
& $ u= 0.7$, \hbox{$\text{from}\, \text{April 25}$ \text{on.}} \\ \hline \hline
\end{tabular}
\end{center}
% ----------------------------

COVID-19 is known as a highly contagious disease and its transmission rate,
$\beta$, varies from country to country, according to the density
of the country and movements of its population. Ozair et al. \cite{Ozair}
assumed $\beta$ to be $ [0.198-0.594]$ per day for Romania,
and $ [0.097-0.291] $ per day for Pakistan. Further, Kuniya \cite{Kuniya}
estimated $\beta$ as $0.26$ $(95\%CI,\, 2.4-2.8)$. Observing
the number of daily reported cases of COVID-19 in Morocco,
we estimate $\beta$ as $0.4517$ $(95\%CI,\, 0.4484-0.455)$.
After the infection, the patient remains in a latent period for $5.5$ days \cite{WHO1,Stephen},
in average, before becoming symptomatic and infectious or asymptomatic with a percentage
that varies from $20.6 \%$ of infected population to $39.9 \%$ \cite{Mizumoto},
while the time needed before his hospitalization is estimated to be $7.5$ days \cite{Huang,Wang,Haut}.
All the parameter values chosen for the D-COVID-19 model \eqref{Sys1} 
are summarized in Table~\ref{values1}.
% ----------------------------
\begin{center}
\captionof{table}{Parameter values for the D-COVID-19 model (\ref{Sys1}).}
\label{values1}
\begin{tabular}{lcccccccccc} \hline \hline
Parameter &   $\beta$& $\epsilon$ &$\gamma_{b}$&$\gamma_{g}$
&$\gamma_{c}$&$\alpha$&$\eta_{a}$&$\eta_{s}$  &$\mu_{s}$ \\ \hline
Value & $0.4517$ & $0.794$& $0.8$&$0.15$&$0.05$&$  0.06 $
&$1/21$&$0.8/21$&$0.01/21$\\ \hline \hline\\[0.3cm]
\end{tabular}
\begin{tabular}{lccccccccccc} \hline \hline
Parameter &$\mu_{b}$& $\mu_{g}$&  $\mu_{c}$ & $r_{b}$ & $r_{g}$ & $r_{c}$
& $ \tau_1 $ &$ \tau_2 $&$ \tau_3 $& $ \tau_4 $\\ \hline
Value &$ 0 $& $ 0 $&$0.4/13.5$ & $ 1/13.5$ & $1/13.5$&$0.6/13.5$
&  $ 5.5$& $7.5$ & $ 21$ & $13.5$\\ \hline \hline
\end{tabular}
\end{center}
% ----------------------------

\begin{remark}
From a biological point of view, the latency period 
is independent of the region or country under study, 
depending only on the structural nature 
of the SARS-CoV-2 coronavirus. 
\end{remark}

We consider that all measures and the adopted confinement strategy 
previously discussed are conserved. The evolution on the number of 
diagnosed infected positive individuals given by the D-COVID-19 model \eqref{Sys1}
versus the daily reported confirmed cases of COVID-19 in Morocco, 
from March 2 to May 6, is presented in Figure~\ref{Historique}. 
We see that the curve generated by the D-COVID-19 model \eqref{Sys1} 
follows the trend of the daily reported cases in Morocco. So, we confirm 
that the implemented measures taken by the authorities have an explicit 
impact on the propagation of the virus in the population since the curve 
of the D-COVID-19 model \eqref{Sys1} has been flattening from April 17 
and tends to go towards the extinction of the disease from May 05. 
In Figure~\ref{Evolution}, we see that Morocco has spent almost $40\%$  
of the total duration of the epidemic at May 11 and will reach extinction 
after four months, in average, from the start of the epidemic on March 2, 2020 ($t=0$).
% ----------------------------
\begin{center}
\includegraphics[width=15cm,height=7cm]{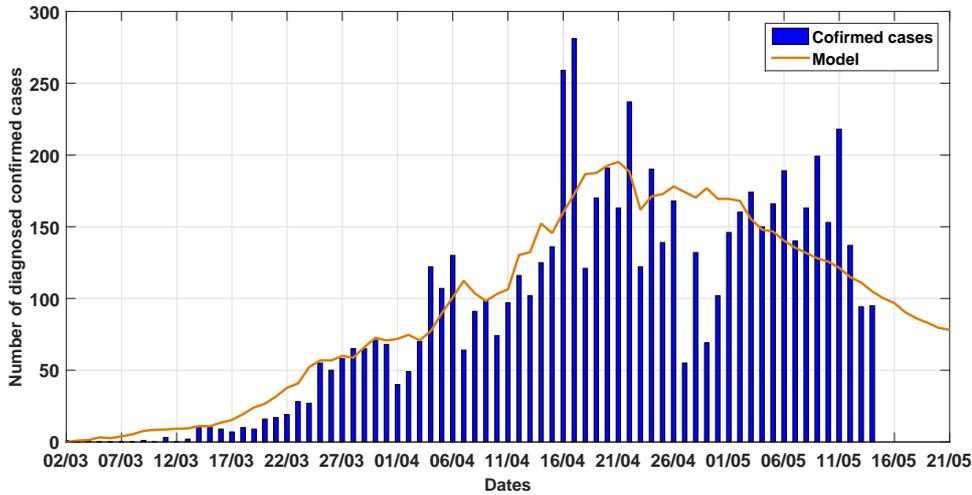}
\captionof{figure}{Evolution of COVID-19 confirmed cases in Morocco per day:
curve predicted by our model \eqref{Sys1} accordigly with 
Tables~\ref{inter} and \ref{values1} versus real data.}
\label{Historique}
\end{center}
% ----------------------------
\begin{center}
\includegraphics[width=15cm,height=7cm]{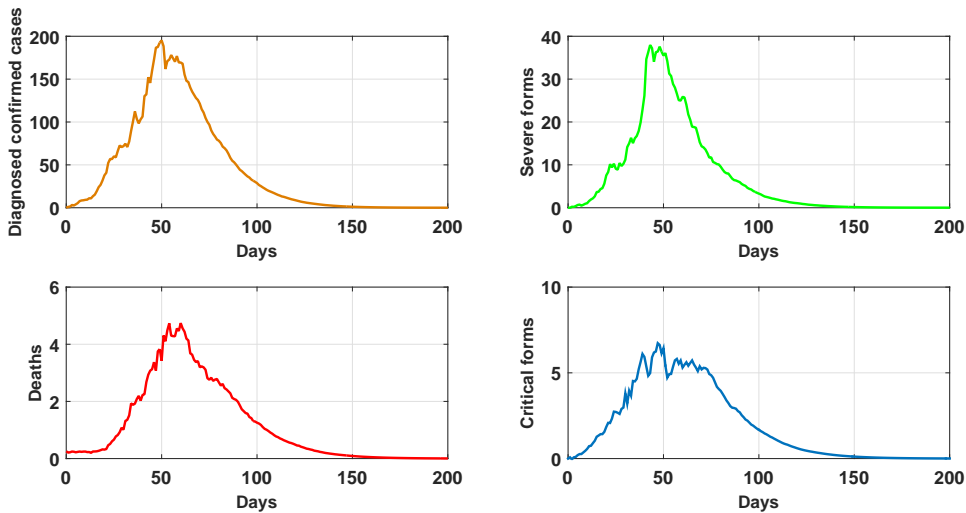}
\captionof{figure}{Evolution given by the D-COVID-19 model \eqref{Sys1}
without deconfinement ($\rho=0)$.}
\label{Evolution}
\end{center}
% ----------------------------

To prove the biological importance of delay parameters, we give the
graphical results of Figure~\ref{Delays}, which allow to compare 
the evolution of diagnosed positive cases with and without delays.
% -----------------------------------
\begin{figure}[ht!]
\centering
\includegraphics[width=15cm,height=7cm]{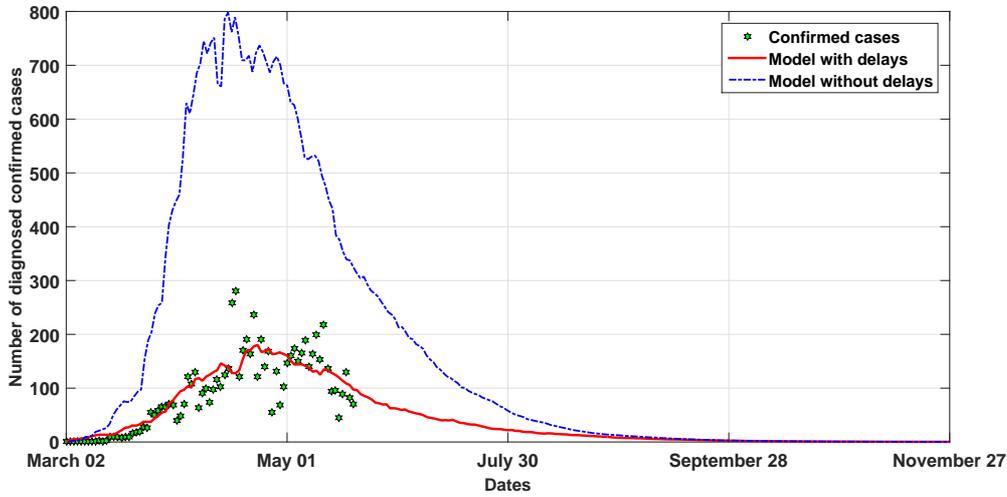}
\captionof{figure}{Effect of delays on the diagnosed confirmed cases
versus clinical data.}
\label{Delays}
\end{figure}
% -----------------------------------
We observe in Figure~\ref{Delays} a high impact of delays on the number
of diagnosed positive cases. Indeed, the plot of model (\ref{Sys1})
without delays $(\tau_i=0,\ i=1,2,3,4$) is very far from the clinical data.
Thus, we conclude that delays play an important role in the study
of the dynamical behavior of COVID-19 worldwide, especially in Morocco,
and allows to better understand the reality.

In Figures~\ref{Deconfinement001}, \ref{Deconfinement01}, and \ref{Deconfinement015},
we consider the deconfinement of $30\%$ of the population returning to work from May 20,
and this proportion is immediately integrated into the susceptible population.
Numerical simulations are presented for three possible scenarios. In the first,
we consider that the whole population highly respects the majority of the measures
announced by the authorities in relation with the deconfinement (Figure~\ref{Deconfinement001}).
The second and third scenarios show the direct impact on the curves when the population
moderately respects the measures with different levels, $\sigma_2=0.10$ and $\sigma_2=0.15$,
respectively. With the last two scenarios we observe the growth in the final number of infected,
deaths, severe and critical forms, which are the most important to monitor,
since the health system should not be saturated. It is also important to note the appearance
of a second significant peak and the fact that the time required for extinction becomes longer,
which relates to the value of $ \sigma_2 $ (Figures~\ref{Deconfinement01} and \ref{Deconfinement015}).
% ----------------------------
\begin{center}
\includegraphics[width=15cm,height=7cm]{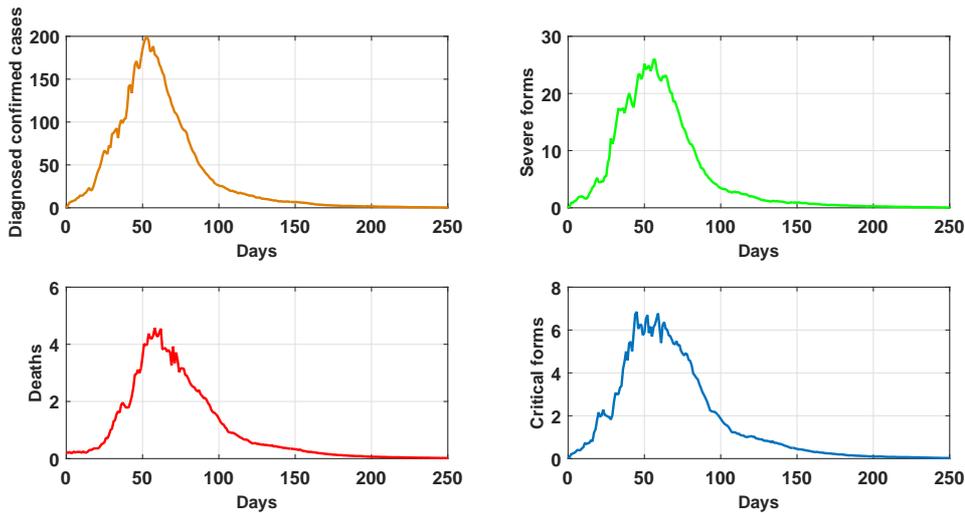}
\captionof{figure}{Evolution of the D-COVID-19 model \eqref{Sys1}
with deconfinement ($\rho=0.3)$ from May 20, 2020
and high effectiveness of the measures ($ \sigma_2=0.01$).}
\label{Deconfinement001}
\end{center}
% ----------------------------
\begin{center}
\includegraphics[width=15cm,height=7cm]{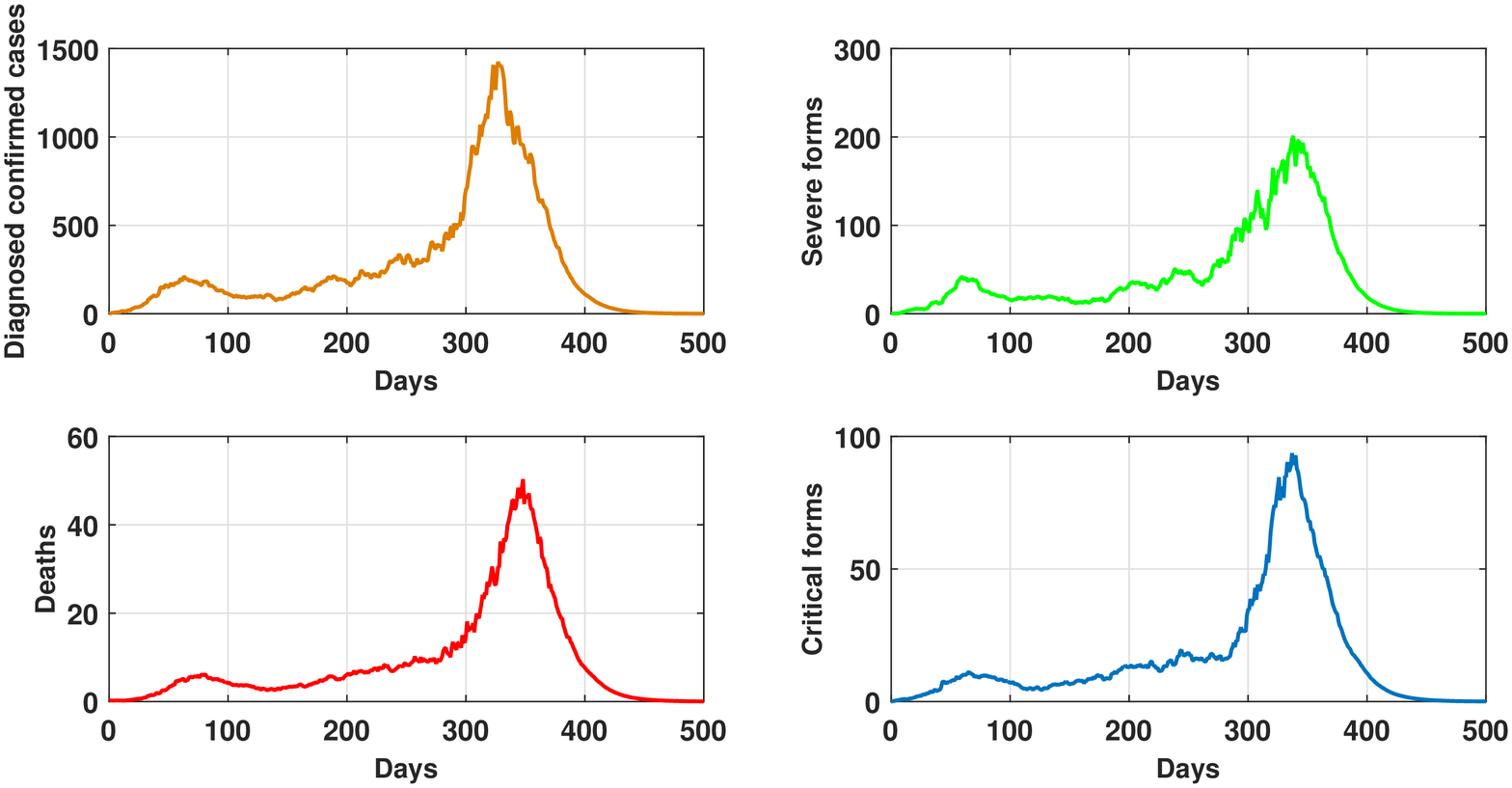}
\captionof{figure}{Evolution of the D-COVID-19 model \eqref{Sys1}
with deconfinement ($\rho=0.3) $ from May 20, 2020 
and moderate effectiveness of the measures ($\sigma_2=0.10$).}
\label{Deconfinement01}
\end{center}
% ----------------------------
\begin{center}
\includegraphics[width=15cm,height=7cm]{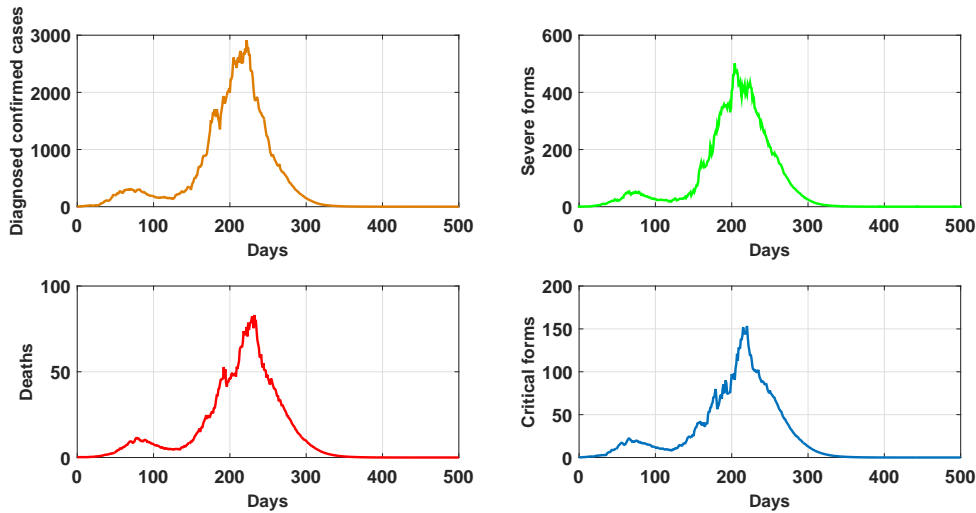}
\captionof{figure}{Evolution of the D-COVID-19 model \eqref{Sys1}
with deconfinement ($\rho=0.3)$ from May 20, 2020
and moderate effectiveness of the measures ($\sigma_2=0.15 $).}
\label{Deconfinement015}
\end{center}
% ----------------------------

\medskip

It should be mentioned that all our plots were obtained 
using the \textsf{Matlab} numerical computing environment 
by discretizing system \eqref{Sys1} 
by means of the higher order method of Milstein 
presented in \cite{Higham} and used in \cite{Mahrouf}.

% ---------------------------------------------

\section{Conclusion}
\label{sec:5}

In this work, we have proposed a delayed stochastic mathematical model 
to describe the dynamical spreading of COVID-19 in Morocco 
by considering all measures designed by authorities, such as confinement 
and deconfinement policies. More precisely, our model takes into account 
four types of delays: the first one is related to the incubation period, 
the second is the time needed to move from the symptomatic infected individuals 
to the three forms of diagnosed cases, the third is the time needed 
to move from the class of infected individuals to the recovered or dead class, 
while the last one is the time needed to pass from the three types of classes
of individuals supported by the Moroccan health system, and under quarantine, 
to the recovered or dead compartments. Besides, to well describe reality, 
we have added a stochastic factor resulting from possible maladjustment
of the population individuals to the measures.

To show that our model is mathematically and biologically well-posed, 
we have proved the global existence of a unique positive solution
(see Theorem~\ref{thm:2.1}). Our result has shown a possible extinction 
of the disease when $\sigma_1^{2}$ is greater than a threshold parameter 
(see Theorem~\ref{thm:3.1}).

In addition, numerical simulations have been performed to forecast the evolution 
of COVID-19. More precisely, we have shown that the evolution of our D-COVID-19 model 
follows the tendency of daily reported confirmed cases in Morocco 
(see Figure~\ref{Historique}). Further, if Moroccan people would maintain, 
strictly, their confinement policy, we observe that the disease dies out 
around four months from March 2, 2020 (see Figure~\ref{Evolution}). 
On the other hand, in response to the decision of deconfinement represented 
by the liberation of the $30\%$ of population, which took place at May 20, 
we simulate three scenarios corresponding to different values of the intensity 
$\sigma_2$. When $\sigma_2= 0.01$ (Figure~\ref{Deconfinement001}), 
the eradication of the disease from the population comes early compared 
to the cases when $\sigma_2=0.10$ (Figure~\ref{Deconfinement01}) 
and  $\sigma_2=0.15$ (Figure~\ref{Deconfinement015}). Additionally, 
the number of diagnosed confirmed cases mainly changes because of the value
of this intensity and a small perturbation leads to relevant quantitative 
changes and significant variations on the time needed for extinction. 
Thus, we observe that the value of this perturbation has a high impact 
on the evolution of COVID-19, which means the Moroccan population has a 
big interest to respect the governmental measures announced May 20, 2020, 
in order to have a successful and good deconfinement strategy.

Here we have compared the predictions of the proposed model \eqref{Sys1} 
with real data until middle of May 2020. We leave the comparison  
of the real data in Morocco till the end of 2020 to a future work, 
where we also plan to incorporate the predictions of the evolution 
of our COVID-19 model with respect to preventive Moroccan measures 
by regions and cities.

% ---------------------------------------------

\begin{acknowledgement}
The authors are grateful to an Associate Editor and two anonymous referees, 
who kindly reviewed an earlier version of the manuscript and provided 
several valuable suggestions and comments.
\end{acknowledgement}

% ---------------------------------------------

% ---------------------------------------------

\end{document}